\newcommand{\BR}{\bbbr}
\newcommand{\BN}{\bbbn}
\newcommand{\BC}{\bbbc}
\newcommand{\family}[1]{\mathfrak{#1}}
\newcommand{\fT}{\family{T}}
\newcommand{\fC}{\family{C}}
\newcommand{\fD}{\family{D}}
\newcommand{\fK}{\family{K}}
\newcommand{\e}{\mathrm{e}}
\newcommand{\I}{\mathrm{i}}
\newcommand{\diff}{\mathop{}\!\mathrm{d}}
\newcommand{\ip}[2]{\langle#1,#2\rangle}
\newcommand{\abs}[1]{\lvert#1\rvert}
\newcommand{\absB}[1]{\Bigl\lvert#1\Bigr\rvert}
\newcommand{\norm}[1]{\lVert#1\rVert}
\DeclareMathOperator{\fourier}{\mathcal{F}}
\DeclareMathOperator{\borel}{\mathcal{B}}
\DeclareMathOperator{\measure}{\mathcal{M}}
\newcommand{\CO}{C_0}
\newcommand{\Cb}{C_{\mathrm{b}}}
\spnewtheorem{algorithm}{Algorithm}{\bfseries}{\normalshape}
\begin{document}
\title{Prony-Based Super-Resolution Phase Retrieval of Sparse, Multivariate Signals%
  \thanks{Supported by
    the Federal Ministry of Education and Research (BMBF, Germany)
    [grant number 13N15754].
  }%
}
\titlerunning{Prony-Based Super-Resolution Phase Retrieval}
%
\author{Robert Beinert \and Saghar Rezaei}
\authorrunning{R. Beinert, S. Rezaei}
%

\institute{Technische Universität Berlin, Institute of Mathematics,\\
  Stra{\ss}e des 17. Juni 136, 10623 Berlin, Germany\\
  \email{robert.beinert@tu-berlin.de}, \email{s.rezaei@campus.tu-berlin.de}\\
  \url{https://tu.berlin/imageanalysis/}}

\maketitle              
\begin{abstract}
  Phase retrieval consists in the recovery of an unknown signal
  from phaseless measurements of its usually complex-valued Fourier transform.
  Without further assumptions,
  this problem is notorious to be severe ill posed
  such that the recovery of the true signal is nearly impossible.
  In certain applications
  like crystallography,
  speckle imaging in astronomy,
  or blind channel estimation in communications,
  the unknown signal has a specific, sparse structure.
  In this paper,
  we exploit these sparse structure
  to recover the unknown signal uniquely
  up to inevitable ambiguities as
  global phase shifts,
  transitions,
  and conjugated reflections.
  Although using a constructive proof
  essentially based on Prony's method,
  our focus lies on the derivation of a recovery guarantee
  for multivariate signals
  using an adaptive sampling scheme.
  Instead of sampling the entire multivariate Fourier intensity,
  we only employ Fourier samples along certain adaptively chosen lines.
  For bivariate signals,
  an analogous result can be established for samples in generic directions.
  The number of samples here scales quadratically
  to the sparsity level of the unknown signal.
  
  \keywords{Phase retrieval
    \and uniqueness guarantees
    \and sparse signals
    \and Prony's method
    \and adaptive sampling.}
\end{abstract}
\section{Introduction}
\label{sec:introduction}

Phase retrieval is one of the major challenges
in many imaging tasks in physics and engineering.  
For instance,
phase retrieval is an essential component of the imaging techniques:
ptychography \cite{KCPU16}, crystallography \cite{Millane:90}, speck\-le imaging \cite{Knox:76}, diffraction tomography \cite{BQ22}.
Although there are many different problem formulations summarized as \enquote{phase retrieval},
the main task consists in the recovery of an image or signal
form the magnitudes of usually complex-valued measurements.
If we think at the application in optics,
the measurements often correspond to the magnitudes of the Fourier or Fresnel transform.
Although the Fourier and Fresnel transform are invertible,
the loss of the phase turns the imaging task into an severe ill-posed inverse problem.
The central challenge is here the strong ambiguousness,
which has been studied for the continuous as well as for the discrete setting,
see for instance \cite{beinert2015ambiguities,BBE17,GroKopRat20}.
To overcome the ambiguousness,
different a priori informations
like support constraints or non-negativity
as well as additional measurements
have been studied \cite{KliKam14,KliSacTik95,BP18,Bei17,Bei17a}.

In this paper,
we are interested in the recovery of multivariate, sparse signals,
which are modeled as complex measure
\begin{equation*}
  \mu \coloneqq \sum_{n=1}^N c_n \, \nu(\cdot - T_n),
\end{equation*}
where $\nu$ is a known structure.
Choosing $\nu$ as Dirac measure,
we obtain point measures or spike functions.
But other choices like Gaussians are reasonable too
and lead to more regular functions.
We here consider the Fourier phase retrieval problem,
meaning that we want to recover $c_n$ and $T_n$
from samples of the Fourier intensity---%
the magnitude of the Fourier transform---%
of $\mu$.
Our focus lies on the derivation of uniqueness guarantees,
i.e.\ on assumptions allowing the determination of $c_n$ and $T_n$
up to unavoidable ambiguities.
Since the transitions $T_n$ may lie on a continuous domain,
our problem can be interpreted as super-resolution phase retrieval \cite{baechler2019super}.
This kind of problem arises in applications like
crystallography \cite{Millane:90},
speckle imaging in astronomy \cite{Knox:76}
as well as in
blind channel estimation in communication \cite{barbotin2012fast}.

\paragraph{Methodology and Relation to the Literature}

To show that
phase retrieval of sparse, multivariate signals is possible in principle,
we rely on the Prony-based techniques in \cite{beinert2017sparse},
where super-resolution phase retrieval is studied for univariate signals.
For the generalization to multivariate signals,
we employ an adaptive sampling strategy,
where the Fourier intensity of the unknown signal is
measured along adaptively chosen or along generic lines.
This sampling method traces back to \cite{PW13},
where the recovery of sparse signals
from their (complex-valued) Fourier transform is studied.
Our main idea is to use the specific sampling setup
to reduce the multivariate phase retrieval problem
into a series of univariate problems,
to solve these univariate instances as in \cite{beinert2017sparse},
and to combine the extracted informations
to solve the multivariate instance.
The considered super-resolution phase retrieval problem
has also been studied in \cite{baechler2019super},
where a greedy-like algorithm is proposed to solve the problem numerically.
In difference to \cite{baechler2019super},
where the entire Fourier domain is sampled,
we show that
the unknown $D$-variate signal is completely determined
by $\mathcal O(N^2)$ measurements on $2D-1$ lines,
where $N$ denotes the sparsity level.

\paragraph{Contribution}

The contribution of this paper is the derivation
of a recovery guarantee for super-resolution phase retrieval
of multivariate signals.
The main theorem shows that,
under mild assumptions like collision-freeness,
the unknown signal is completely determined by
equispaced samples on a few lines.
All in all,
we here require $\mathcal O(D N^2)$ samples of the Fourier intensity
to recover a $D$-variate signal composed of $N$ components.
This shows that,
at least theoretically,
it is enough to use a space sampling setup
to recover a sparse signal.
Although the proofs are constructive,
the focus lies on the theoretical uniqueness guarantee
since the applied Prony method is known to be unstable
for noisy measurements.

\paragraph{Outline}

Before considering the super-resolution phase retrieval problem,
we briefly introduce the needed concepts like
the Fourier transform of measures and
Prony's method in §~\ref{sec:preleminaries}.
In §~\ref{sec:phase-retr-probl},
we define the considered problem in more details
and discuss the unavoidable ambiguities.
Section~\ref{sec:phase-retrieval-line} is devoted to
the univariate sparse phase retrieval problem,
which we generalize to the multivariate setting in §~\ref{sec:phase-retrieval-real-space}.
The numerical simulation in §~\ref{sec:simulations} show that
the constructive proofs can be implemented
to recover the unknown signal at least in the noise-free setting.
A final discussion is given in §~\ref{sec:conclusion}.

\section{Preleminaries}
\label{sec:preleminaries}

\subsection{Fourier Transform of Measures}
\label{sec:four-transf-meas}

Subsequently,
the unknown signals are characterized as complex measures.
For this,
let $\borel(\BR^D)$ denote the Borel $\sigma$-algebra
of the Euclidean space $\BR^D$,
and
$\measure(\BR^D)$ the space of all regular, finite, complex measures.
Every considered signal is then interpreted as complex-valued mapping
$\mu \colon \borel(\BR^D) \to \BC$
with $\mu \in \measure (\BR^D)$.
Recall that
$\measure(\BR^D)$ is the dual space of $\CO(\BR^D)$,
which consists of all continuous functions $\phi\colon \BR^D \to \BC$
where $\phi(x)$ vanishes for $\norm{x} \to \infty$.
Every measure $\mu \in \measure(\BR^D)$ hence defines
a continuous, linear mapping via
\begin{equation*}
  \phi
  \mapsto
  \ip{\mu}{\phi}
  \coloneqq \int_{\BR^D} \phi(x) \diff \mu(x).
\end{equation*}
The convolution of two measures
$\mu, \nu \in \measure(\BR^D)$
is indirectly defined as
\begin{equation*}
  \ip{\mu * \nu}{\phi}
  \coloneqq
  \iint_{\BR^D \times \BR^D} \phi(x+y) \diff \nu(y) \diff \mu(x),
  \quad
  \phi \in \CO(\BR^D).
\end{equation*}
The Fourier transform on $\measure(\BR^D)$
is given by
$\fourier \colon \measure(\BR^D) \to \Cb(\BR^D)$
with
\begin{equation*}
  \fourier [\mu](\omega)
  \coloneqq
  \hat \mu(\omega)
  \coloneqq
  \int_{\BR^D} \e^{-\I \ip{\omega}{x}} \diff \mu(x),
  \quad
  \omega \in \BR^D,
\end{equation*}
where $\Cb(\BR^D)$ consists of all bounded, continuous functions on $\BR^D$.
Notice that
$\fourier$ is continuous and injective.
Further,
the \emph{Fourier convolution theorem} states
\begin{equation*}
  \fourier[\mu * \nu]
  =
  \hat \mu \hat \nu.
\end{equation*}

\subsection{Prony's Method}
\label{sec:pronys-method}

The Fourier intensities of a sparse signal on $\BR$ are essentially given
by a non-negative exponential sum $E \colon \BR \to \BR$
of the form
\begin{equation}
  \label{eq:exp-sum}
  E(\omega)
  \coloneqq
  \sum_{\ell=-L}^{L} \gamma_\ell \, \e^{- \I \omega \tau_\ell}
  =
  \gamma_{0}
  + \sum_{\ell=1}^{L}
  \bigl( \gamma_{\ell} \, \e^{-\I\omega\tau_{\ell}}
  + \bar \gamma_{\ell} \, \e^{\I \omega \tau_{\ell}} \bigr)
\end{equation}
with $\gamma_{\ell} = \bar \gamma_{-\ell} \in \BC\setminus\{0\}$
and $\tau_{\ell} = - \tau_{-\ell} \in \BR$
for $\ell = 0, \dots, L$.
Note that $\gamma_{0} \in \BR\setminus\{0\}$ and $\tau_{0}=0$.
For pairwise distinct $\tau_{\ell}$,
the parameters $\gamma_{\ell}$ and $\tau_{\ell}$ may be recovered
by Prony's method \cite{de1795essai,Hil87,potts2010parameter}.
In a nutshell,
for $h > 0$ with $h \tau_\ell < \pi$,
$\ell = 1, \dots, L$,
and for $M \ge 4L+1$,
we define the Prony polynomial
\begin{equation}
  \label{eq:prony-poly}
  \Lambda(z)
  \coloneqq
  \prod_{\ell=-L}^L
  \bigl( z - \e^{-\I h \tau_{\ell}} \bigr)
  =
  \sum_{k=0}^{2L+1} \lambda_k \, z^k,
  \quad
  z \in \BC,
\end{equation}
and observe
\begin{equation*}
  \sum_{k=0}^{2L+1} \lambda_k \, E(h(k+m))
  =
  \sum_{\ell=-L}^{L}
  \gamma_\ell \, \e^{-\I h m \tau_\ell} \,
  \Lambda(\e^{-\I h \tau_{\ell}})
  =
  0
\end{equation*}
for $m = 0, \dots, M-2L-1$.
Due to $\lambda_{2L+1} = 1$,
we may compute the remaining $\lambda_{\ell}$
by solving an equation system,
whose solution is, in fact, unique.
Knowing $\lambda_\ell$,
we determine the roots of $\Lambda$
and the frequencies $\tau_\ell$.
The coefficients $\gamma_\ell$ are then given
by the over-determined Vandermonde-type system
\begin{equation}
  \label{eq:vander-sys}
  \sum_{\ell=-L}^L \gamma_\ell \, \e^{-\I h m \tau_\ell}
  =
  E(hm),
  \quad
  m = 0, \dots, M.
\end{equation}
For our numerical experiments,
we use the so-called  Approximative Prony Method (APM)
by Potts and Tasche \cite{potts2010parameter},
which is based on the above consideration
but is numerically more stable.

\begin{algorithm}[APM, \cite{potts2010parameter}]
  \newline
  \emph{Input:} $L \in \BN$, $M \ge 4L+1$, $h>0$ with $h \tau_\ell < \pi$, $(E(hm))_{m=0}^M$.
  \begin{enumerate}[nosep]
  \item Compute the right singular vector $(\lambda_\ell)_{\ell=0}^{2L+1}$
    to the smallest singular value of
    \begin{equation*}
      (E(h(k+m)))_{m,k=0}^{M-2L-1,2L+1}.
    \end{equation*}
  \item Compute the roots $(z_\ell)_{\ell=-L}^L$ of
    $\Lambda(z)$ in \eqref{eq:prony-poly}
    in order $z_\ell = \bar z_{-\ell}$.
  \item Compute the least-square solution $(\gamma_\ell)_{\ell=-L}^L$ of \eqref{eq:vander-sys}.
  \item Set $\tau_\ell \coloneqq h^{-1} \arg z_\ell$.
  \end{enumerate}
  \emph{Output:} $\gamma_\ell$, $\tau_{\ell}$, $\ell = - L, \dots, L$.
\end{algorithm}

Instead of the exact number of terms $2L+1$,
the method can be applied with an upper estimate for $L$.
In this case,
$z_\ell$ not lying on the unit circle
and terms with small $\gamma_\ell$ can be neglected.

\section{The Phase Retrieval Problem}
\label{sec:phase-retr-probl}

Originally,
phase retrieval means the recovery of an unknown signal
only from the intensities of its Fourier transform.
In the following,
we consider phase retrieval for sparse signals,
which are superpositions of finitely many transitions of one known structure.
Moreover,
signal and structure are modeled as complex measures.
Thus,
the true signal $\mu \in \measure(\BR^D)$ is a superposition
of finitely many transitions $\nu_{T_n} \coloneqq \nu(\cdot - T_n)$
with $T_n \in \BR^D$ and $\nu \in \measure(\BR^D)$.
Using the convolution and the Dirac measure $\delta$,
the considered \emph{phase retrieval problem} has the following form:
recover the coefficients $c_n \in \BC \setminus \{0\}$
and the translations $T_n \in \BR^D$
of the \emph{structured signal}
\begin{equation}
  \label{eq:struc-sig}
  \mu
  \coloneqq
  \sum_{n=1}^{N} c_{n} \, \nu_{T_{n}}
  =
  \nu * \Bigl(
  \sum_{n=1}^{N} c_{n} \, \delta_{T_n}
  \Bigr)
\end{equation}
with $\nu \in \measure(\BR^D)$
from samples of its (squared) Fourier intensity
\begin{equation}
  \label{eq:four-int}
  \abs{\hat \mu(\omega)}^2
  =
  \abs{\hat \nu(\omega)}^2
  \sum_{n=1}^N \sum_{k=1}^N
  c_n \bar c_k \, \e^{-\I \ip{\omega}{T_n - T_k}}
  =
  \abs{\hat \nu(\omega)}^2
  \sum_{\ell = -L}^L
  \gamma_\ell \, \e^{-\I \ip{\omega}{\tau_\ell}}.
\end{equation}
The (indexed) families of translates and coefficients are henceforth denoted by
$\fT \coloneqq [T_1, \dots, T_N]$ and $\fC \coloneqq [c_1, \dots, c_N]$,
where the index of $c_n$ and $T_n$ always corresponds to each other.
Conceivable structures are the Dirac measure $\delta$,
in which case $\mu$ may be interpreted as Dirac signal,
or a Gaussian,
in which case $\mu$ may be interpreted as ordinary function
via its density function.
The considered phase retrieval problem is never uniquely solvable.

\begin{lemma}[Trivial Ambiguities]
  \label{lem:triv-amb}
  For every $\mu \in \measure(\BR^D)$,
  global phase shifts,
  transitions,
  and conjugated reflections
  have the same Fourier intensity,
  i.e.
  \begin{equation*}
    \abs{\hat \mu}
    =
    \abs{\fourier[ \e^{\I \alpha} \mu]}
    =
    \abs{\fourier[\mu(\cdot - x_0)]}
    =
    \abs{\fourier[\overline{\mu(- \cdot)}]},
    \quad
    \alpha \in \BR,
    \;
    x_0 \in \BR^D.
  \end{equation*}
\end{lemma}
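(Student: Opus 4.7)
The plan is to prove all three identities by direct computation from the definition of the Fourier transform of a complex measure given in §~\ref{sec:four-transf-meas}. In each case, I will show that the Fourier transform of the modified measure equals $\hat{\mu}(\omega)$ up to a unimodular prefactor or an overall complex conjugation, which yields equality of the moduli immediately.

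For the global phase shift, I interpret $\e^{\I\alpha}\mu$ as the complex measure $A \mapsto \e^{\I\alpha}\,\mu(A)$, so linearity of the integral against $\e^{-\I\ip{\omega}{x}}$ gives
\begin{equation*}
  \fourier[\e^{\I\alpha}\mu](\omega)
  = \e^{\I\alpha}\,\hat\mu(\omega),
\end{equation*}
and the modulus of the prefactor is one. For the translation, I interpret $\mu(\cdot - x_0)$ as the pushforward of $\mu$ under $x\mapsto x+x_0$; the change of variables $x\mapsto x-x_0$ in the Fourier integral produces
\begin{equation*}
  \fourier[\mu(\cdot - x_0)](\omega)
  = \e^{-\I\ip{\omega}{x_0}}\,\hat\mu(\omega),
\end{equation*}
again a unimodular factor.

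The conjugated reflection is the only step where the measure-theoretic interpretation needs a moment of care: I combine the pushforward under $x\mapsto -x$ (which turns $\diff\mu(x)$ into $\diff\mu(-x)$) with the complex conjugation $\bar\nu(A)\coloneqq\overline{\nu(A)}$ of a complex measure, which satisfies $\int \phi\,\diff\bar\nu = \overline{\int\bar\phi\,\diff\nu}$. Applying this to $\phi(x)=\e^{-\I\ip{\omega}{x}}$ and then making the substitution $x\mapsto -x$ gives
\begin{equation*}
  \fourier\bigl[\overline{\mu(-\cdot)}\bigr](\omega)
  = \overline{\int_{\BR^D}\e^{\I\ip{\omega}{-x}}\diff\mu(x)}
  = \overline{\hat\mu(\omega)},
\end{equation*}
whose modulus coincides with $\abs{\hat\mu(\omega)}$.

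There is no real obstacle here; the only thing worth highlighting is that one must pin down once and for all what the three operations mean for a complex measure (linear multiplication, pushforward under a translation, pushforward under reflection combined with setwise conjugation). Once these conventions are fixed, each of the three identities reduces to a one-line manipulation of the Fourier integral.
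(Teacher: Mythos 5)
Your proof is correct and is exactly the routine computation the paper has in mind when it omits the proof with the remark that the statement "can be established using standard computation rules of the Fourier transform": each operation contributes a unimodular factor or an overall conjugation to $\hat\mu$. Your care in fixing the measure-theoretic meaning of the three operations (scalar multiple, pushforward, conjugated pushforward) is a welcome addition but does not change the argument.
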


Since the statement can be established
using standard computation rules of the Fourier transform,
the proof is omitted.
Besides these so-called trivial ambiguities,
there may occur further non-trivial ambiguities.

\begin{lemma}[Non-Trivial Ambiguities]
  \label{lem:non-triv-amb}
  Let $\mu = \mu_1 * \mu_2$ be the convolution of
  $\mu_1,\mu_2 \in \measure(\BR^D)$.
  The signal $\mu_1 * \overline{\mu_2(- \cdot)}$
  then has the same Fourier intensity.
\end{lemma}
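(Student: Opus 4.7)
The plan is to reduce the claim to a direct computation on the Fourier side using the convolution theorem and the standard identity for conjugated reflection. There is essentially no obstacle beyond bookkeeping; the challenge, if any, lies only in phrasing the reflection-conjugation rule at the level of measures rather than functions.

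First I would invoke the Fourier convolution theorem, already stated in §~\ref{sec:four-transf-meas}, to write $\widehat{\mu_1 * \mu_2} = \hat\mu_1 \, \hat\mu_2$ and analogously $\widehat{\mu_1 * \overline{\mu_2(-\cdot)}} = \hat\mu_1 \cdot \fourier[\overline{\mu_2(-\cdot)}]$. The only non-trivial ingredient is the identity $\fourier[\overline{\mu_2(-\cdot)}] = \overline{\hat\mu_2}$, which is precisely the third trivial ambiguity in \Cref{lem:triv-amb}; one can either cite it directly or re-derive it in one line by substituting $x \mapsto -x$ in the defining integral and pulling complex conjugation out, noting that $\e^{-\I\ip{\omega}{-x}} = \overline{\e^{-\I\ip{\omega}{x}}}$.

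Combining these two steps yields $\fourier[\mu_1 * \overline{\mu_2(-\cdot)}] = \hat\mu_1 \, \overline{\hat\mu_2}$. Taking absolute values gives
\begin{equation*}
  \absb{\fourier[\mu_1 * \overline{\mu_2(-\cdot)}]}
  = \abs{\hat\mu_1} \, \absb{\overline{\hat\mu_2}}
  = \abs{\hat\mu_1} \, \abs{\hat\mu_2}
  = \absb{\fourier[\mu_1 * \mu_2]},
\end{equation*}
which is exactly the asserted equality of Fourier intensities and completes the proof.

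The main conceptual point worth flagging is that the factorization $\mu = \mu_1 * \mu_2$ is genuinely what produces a \emph{non-trivial} ambiguity: whereas in \Cref{lem:triv-amb} the operation $\mu \mapsto \overline{\mu(-\cdot)}$ is applied to the whole signal, here it is applied only to the factor $\mu_2$, so in general $\mu_1 * \overline{\mu_2(-\cdot)}$ is not a trivial modification of $\mu$. No delicate estimate or structural assumption on $\mu_1, \mu_2 \in \measure(\BR^D)$ is required beyond the finiteness ensuring that all Fourier transforms and convolutions live in $\Cb(\BR^D)$ resp.\ $\measure(\BR^D)$, as set up in §~\ref{sec:four-transf-meas}.
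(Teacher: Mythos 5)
Your proof is correct and follows exactly the route the paper indicates: the paper dispenses with the argument by saying the statement ``immediately follows from the Fourier convolution theorem,'' and your write-up simply fills in the two ingredients (the convolution theorem and the identity $\fourier[\overline{\mu_2(-\cdot)}] = \overline{\hat\mu_2}$ from \Cref{lem:triv-amb}) before taking absolute values. Nothing further is needed.
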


The statement immediately follows form the Fourier convolution theorem.
As an immediate consequence,
we may conjugate and reflect
the structure part $\nu$
and the location part $\sum_{n=1}^N c_n \, \delta_{T_n}$
of \eqref{eq:struc-sig}
independently of each other.
Therefore,
we may only hope to recover $\mu$ up to the following ambiguities.

\begin{definition}[Inevitable Ambiguities]
  The signal in \eqref{eq:struc-sig}
  can only be recovered up to
  a global phase shift of all coefficients in $\fC$;
  a global shift of all transitions in $\fT$;
  and the conjugation of $\fC$ together with the reflection of $\fT$.
\end{definition}

In general,
the number of non-trivial ambiguities can be immense.
For example,
under the assumption that the transitions $\fT$ are equispaced,
there may exists up to $2^{N-2}$ non-trivial ambiguities \cite{beinert2015ambiguities},
which can be characterized as in \Cref{lem:non-triv-amb}.
To get rid of the non-trivial ambiguities,
we have to assume
that the transitions $\fT$ are somehow unregular. 

\begin{definition}[Collision-Freeness]
  A family $\fT \coloneqq [T_1, \dots, T_N] \subset \BR^D$
  is called \emph{collision-free}
  if the differences $T_n - T_k$ are pairwise distinct
  for all $n \ne k$.
\end{definition}

\section{Sparse Phase Retrieval on the Line}
\label{sec:phase-retrieval-line}

Phase retrieval of structured signals on the line
has been well studied \cite{beinert2017sparse,RCLV13,BP17,PPST18}.
Under certain assumptions,
the recovery from equispaced measurements is principally possible.

\begin{theorem}[Phase Retrieval, {\cite[Thm~3.1]{beinert2017sparse}}]
  \label{thm:pr1d}
  Let $\mu$ be of the form \eqref{eq:struc-sig}
  with $\hat \nu (\omega) \ne 0$, $\omega \in \BR$,
  collision-free $T_1 < \cdots < T_N$,
  and  $\abs{c_1} \neq \abs{c_N}$.
  Further,
  choose $h>0$ such that $h(T_n - T_k) < \pi$ for $n,k$.
  Then $\mu$ can be uniquely reconstructed
  from $\abs{\hat{\mu}(h m)}$, $m=0,\dots,2 N (N-1)+1$,
  up to inevitable ambiguities.
\end{theorem}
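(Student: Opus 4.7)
The plan is to reduce the problem to the recovery of an exponential sum, apply Prony's method, and then perform a turnpike-style peeling to reconstruct the sparse signal.

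For the first stage, since $\hat\nu$ is known and non-vanishing on $\BR$, dividing the measurements gives the values $E(hm) \coloneqq \abs{\hat\mu(hm)}^2 / \abs{\hat\nu(hm)}^2$, which by \eqref{eq:four-int} are samples of the exponential sum in \eqref{eq:exp-sum} with frequencies $\tau_\ell$ equal to the pairwise differences $T_n - T_k$ and coefficients $\gamma_\ell = \sum_{T_n - T_k = \tau_\ell} c_n \bar c_k$. Collision-freeness of $\fT$ ensures that each non-zero difference is attained by a unique pair $(n,k)$, so the sum has exactly $2L+1 = N(N-1)+1$ non-zero terms with $L = N(N-1)/2$, and $\gamma_\ell = c_n \bar c_k$ for the unique matching pair whenever $\tau_\ell \ne 0$. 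Since $h \tau_\ell < \pi$ by assumption and $M = 2N(N-1)+1 \ge 4L+1$, Prony's method of §~\ref{sec:pronys-method} uniquely recovers every pair $(\tau_\ell, \gamma_\ell)$ from the given samples.

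For the second stage, reconstruct $\fT$ and $\fC$ from this weighted difference set. Fix the inevitable global shift by setting $T_1 = 0$ and the inevitable global phase so that $c_1 > 0$. The largest recovered frequency $\tau_L$ then equals $T_N$, and $\gamma_L = c_N c_1$ yields $c_N$. Proceed inductively in the manner of the turnpike problem: at each stage the largest remaining positive difference is of the form $T_k - T_1$ or $T_N - T_k$ for some not-yet-placed $T_k$, and the attached $\gamma_\ell$ fixes the new coefficient $c_k$ from the already-known endpoint coefficient. This produces a binary search tree of candidate reconstructions, each internal node corresponding to a left/right choice for the next translate.

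The crux, and the step I expect to be the main obstacle, is to show that this tree has at most two consistent leaves, corresponding exactly to $\mu$ and its conjugated reflection $\overline{\mu(-\cdot)}$. Collision-freeness rules out coincidences in the difference set that would allow hybrid leaves formed by a mixture of left- and right-choices, while the hypothesis $\abs{c_1} \ne \abs{c_N}$ excludes the degenerate case in which both endpoint magnitudes agree and additional non-trivial ambiguities in the endpoint assignment appear; together they force every branching in the tree to be resolved consistently up to the single global reflection. Once uniqueness up to this inevitable reflection is established, the claim follows; the detailed combinatorial argument is essentially that of \cite[Thm~3.1]{beinert2017sparse}.
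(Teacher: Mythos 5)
Your proposal follows exactly the route the paper takes (and the cited source \cite{beinert2017sparse}): divide out $\abs{\hat\nu}^2$, recover the $N(N-1)+1$ pairs $(\tau_\ell,\gamma_\ell)$ by Prony's method from the $4L+2$ samples, and then peel off translates from the weighted difference set as in Algorithm~2, with collision-freeness guaranteeing $\gamma_\ell = c_n\bar c_k$ for a unique pair and $\abs{c_1}\ne\abs{c_N}$ ruling out a simultaneous consistency of the left/right hypotheses at each step. Your identification of the crux and the roles of the two hypotheses is accurate, and deferring the detailed combinatorial verification to the cited reference matches the paper, which likewise states the theorem by citation and only sketches the constructive algorithm.
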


The constructive proof leads to a two-step algorithm:
First,
the parameters $\gamma_\ell$ and $\tau_\ell$ of \eqref{eq:four-int}
are determined using Prony's method.
Second,
the hidden relation between the indices $n,k$ and $\ell$
is revealed. 

\begin{algorithm}[Phase Retrieval on the Line, \cite{beinert2017sparse}]
  \label{alg:1d-pr}
  \newline
  \emph{Input:}
  $N\in\BN$,
  $M \ge 2 N (N-1) + 1$,
  $h > 0$ with $h (T_n - T_k) < \pi$,
  $(\abs{\hat \mu (hm)}^2)_{m=0}^M$.
  \begin{enumerate}[nosep]
  \item Apply APM with $E(hm) \coloneqq \abs{\hat\mu(hm)}^2 / \abs{\hat \nu(hm)}^2$
    to determine $(\tau_\ell)_{\ell=-L}^L$
    and $(\gamma_\ell)_{\ell=-L}^L$ in \eqref{eq:four-int}.
    Assume that $(\tau_\ell)_{\ell=-L}^L$ is ordered increasingly.
  \item Set $\fD \coloneqq  [ \tau_k : k=0,\dots, N(N-1)/2 ]$
    and $L \coloneqq N(N-1)/2$.
  \item Set $T_1 \coloneqq 0$,
    $T_N \coloneqq \tau_{L}$,
    $T_{N-1} \coloneqq \tau_{L-1}$. 
    Find $\ell^*$ with $|T_N-T_{N-1}| = \tau_{\ell^*}$.
    Set $c_1 \coloneqq \abs{\gamma_{L} \bar \gamma_{L-1}/\gamma_{\ell^*}}^{1/2}$,
    $c_N \coloneqq \gamma_{L} / \bar c_1$,
    $c_{N-1} \coloneqq \gamma_{L-1} / \bar c_1$.
    Remove $\tau_0$, $\tau_{\ell^*}$, $\tau_{L-1}$, $\tau_L$ from $\fD$.
  \item Initiate the lists $\fT \coloneqq [T_1, T_N, T_{N-1}]$
    and $\fC \coloneqq [c_1, c_N, c_{N-1}]$.
  \item For the maximal remaining $\tau_{L^*}$ in $\fD$,
    find $\ell^*$ with $\tau_{L^*}+ \tau_{\ell^*} = T_N$.
    Compute $d_{\mathrm r} \coloneqq \gamma_{L^*} / \bar c_1$
    and $d_{\mathrm l} \coloneqq \gamma_{\ell^*} / \bar c_1$.
    \begin{enumerate}[a)]
    \item If
      $
        \abs{c_N \bar d_{\mathrm r} - \gamma_{\ell^*}}
        <
        \abs{c_N \bar d_{\mathrm l} - \gamma_{L^*}}
      $,
      add $\tau_{L^*}$ to $\fT$
      and $d_{\mathrm r}$ to $\fC$.
    \item Otherwise add $\tau_{\ell^*}$ to $\fT$
      and $d_{\mathrm l}$ to $\fC$.
  \end{enumerate}
  Remove all $\abs{S - S'}$ with $S, S' \in \fT$ from $\fD$,
  and repeat until $\fD$ is empty.
\end{enumerate}
\emph{Output:} $\fC$, $\fT$.
\end{algorithm}

\section{Sparse Phase Retrieval on the Real Space}
\label{sec:phase-retrieval-real-space}

To extend the phase retrieval procedure
from the line to the real space,
we combine \Cref{thm:pr1d} and \Cref{alg:1d-pr}
with the sampling strategy in \cite{PW13}.
Instead of sampling the whole Fourier domain,
we only require the Fourier intensity $\abs{\hat \mu}$
sampled along adaptively chosen lines in $\BR^D$.
Initially,
we consider the setup
where we have given samples along the Cartesian axes
spanned by the unit vectors $e_d$, $d=1, \dots, D$,
and where we choose additional sampling lines accordingly.
Since the condition $\abs{c_1} \ne \abs{c_N}$ in \Cref{thm:pr1d}
strongly depends on the geometry of the transitions $T_n$,
we henceforth assume that
the absolute values $\abs{\fC}$ of the coefficients are distinct.

\begin{theorem}[Phase Retrieval on the Real Space]
  \label{thm:pr2d}
  Let $\mu$ be of the form \eqref{eq:struc-sig}
  with $\hat \nu(\omega) \ne 0$, $\omega \in \BR^D$,
  collision-free $\fT^{e_{d}}$ for every $d=1,\dots,D$,
  and distinct $\abs{\fC}$.
  Further,
  choose $h>0$ such that $h \, \norm{T_n - T_k} < \pi$ for all $n,k$.
  Then there exist $\theta_d\in \BR^d$, $d = 1, \dots, D-1$,
  with $\norm{\theta_d} = 1$
  such that $\mu$ can be uniquely reconstructed from
  \begin{equation*}
    \bigl\{
    \abs{\hat \mu(hm \, e_d)},
    :
    m = 0,\dots, 2N(N-1) + 1,
    d=1,\dots, D
    \bigr\}
  \end{equation*}
  and the adaptive samples
  \begin{equation*}
    \bigl\{
    \abs{\hat \mu(hm \, \theta_d)}
    :
    m = 0,\dots, 2N(N-1) + 1,
    d=1,\dots, D-1
    \bigr\}
  \end{equation*}
  up to inevitable ambiguities.
\end{theorem}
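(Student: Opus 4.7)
The plan is to reduce the multivariate phase retrieval problem to a collection of univariate instances by exploiting that sampling $\hat\mu$ along a line $\BR\theta$ through the origin yields the Fourier transform of the projected measure. Since $\mu = \nu * \sum_n c_n\,\delta_{T_n}$, we have $\hat\mu(\omega\theta) = \hat\nu(\omega\theta) \sum_n c_n\, \e^{-\I\omega\ip{\theta}{T_n}}$, so $\abs{\hat\mu(\omega\theta)}^2/\abs{\hat\nu(\omega\theta)}^2$ is a nonnegative exponential sum with frequencies $\ip{\theta}{T_n-T_k}$, which fits the form \eqref{eq:four-int} and is amenable to \Cref{thm:pr1d} whenever $\fT^\theta$ is collision-free.

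First I would apply \Cref{thm:pr1d} to each axis direction $e_d$. By hypothesis $\fT^{e_d}$ is collision-free and $\abs{\fC}$ is distinct, so the condition $\abs{c_1}\neq\abs{c_N}$ of \Cref{thm:pr1d} holds trivially. This returns, for each $d$, pairs $(t_n^{(d)}, c_n^{(d)})$ with
\begin{equation*}
  t_n^{(d)} = \epsilon_d\,\ip{e_d}{T_{\sigma_d(n)}} + s_d
  \quad\text{and}\quad
  c_n^{(d)} = \e^{\I\alpha_d}\, c_{\sigma_d(n)}^{[\epsilon_d]},
\end{equation*}
where $\epsilon_d\in\{\pm1\}$, $s_d\in\BR$, $\alpha_d\in\BR$, $\sigma_d$ is a permutation, and $c^{[+1]}:=c$, $c^{[-1]}:=\bar c$ encode the conjugated reflection. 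Because $\abs{c_n^{(d)}}=\abs{c_{\sigma_d(n)}}$ does not depend on $\epsilon_d$ or $\alpha_d$, the distinctness of $\abs{\fC}$ lets me reorder each output list so that the index $n$ refers consistently to the same unknown point $T_n$ across all $d$.

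Next, observe that the inevitable global translation absorbs all $D$ shifts $s_d$ simultaneously, the inevitable conjugated reflection absorbs a joint flip of every $\epsilon_d$, and the inevitable global phase shift absorbs one of the $\alpha_d$. After fixing $\epsilon_1=+1$ by convention, only the $D-1$ signs $\epsilon_2,\dots,\epsilon_D$ remain undetermined. To resolve them I choose adaptive directions $\theta_d$ in the span of $e_d$ and $e_{d+1}$, say $\theta_d = \cos\phi_d\, e_d + \sin\phi_d\, e_{d+1}$, with $\phi_d$ chosen so that $\fT^{\theta_d}$ is collision-free. Applying \Cref{thm:pr1d} along $\theta_d$ recovers the values $v_n := \cos\phi_d\,\ip{e_d}{T_n} + \sin\phi_d\,\ip{e_{d+1}}{T_n}$ up to its own 1D ambiguity, which is matched via $\abs{\fC}$. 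Since $\epsilon_d$ is known from the previous iteration, comparing the two candidates $\cos\phi_d\, t_n^{(d)} \pm \sin\phi_d\, t_n^{(d+1)}$ against $v_n$ uniquely determines $\epsilon_{d+1}$. Iterating over $d=1,\dots,D-1$ fixes every $\epsilon_d$, reconstructs the translates $T_n$ up to a global translation and simultaneous reflection, and recovers the coefficients $c_n$ from the $e_1$-reconstruction up to the global phase $\alpha_1$, matching the inevitable ambiguities.

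The main obstacle is ensuring that each adaptive $\theta_d$ fulfils both roles at once: it must yield collision-free $\fT^{\theta_d}$ so that \Cref{thm:pr1d} applies, and the recovered projections must actually discriminate between the two sign hypotheses for $\epsilon_{d+1}$. Both conditions fail only for angles $\phi_d$ in a finite exceptional set (collisions give polynomial equations in $\tan\phi_d$, while indistinguishability forces entire coordinates to vanish), so a generic choice suffices, which is where the existence statement ``there exist $\theta_d$'' of the theorem comes from. A minor technical point is to check that the step size $h$ prescribed by the theorem satisfies the 1D sampling condition $h\,\abs{\ip{\theta_d}{T_n-T_k}}<\pi$ along every sampled line; this follows from $\norm{\theta_d}=1$ and Cauchy--Schwarz together with $h\,\norm{T_n-T_k}<\pi$.
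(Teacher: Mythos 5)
Your proposal is correct and follows the same overall architecture as the paper's proof: reduce to univariate instances along lines via $\hat\mu(\omega\theta)=\hat\nu(\omega\theta)\sum_n c_n\e^{-\I\omega\ip{\theta}{T_n}}$, solve each with \Cref{thm:pr1d} (the condition $\abs{c_1}\ne\abs{c_N}$ following from distinct $\abs{\fC}$), match points across directions via the distinct moduli, and spend the $D-1$ adaptive lines on resolving the per-axis reflection ambiguities. Where you differ is in the disambiguation mechanism. The paper builds the explicit candidate set $\tilde\fT=\fK_1\cup\dots\cup\fK_N$ of all $2^{D-1}$ sign combinations, chooses $\phi_d$ forming a basis with $e_1$ so that the ordering condition \eqref{eq:fix-order} holds on the candidate cubes (which kills the reflection ambiguity of $\fT^{\phi_d}$ outright), and then reconstructs $\fT$ by the linear system \eqref{eq:rec-T} in the basis $\{e_1,\phi_1,\dots,\phi_{D-1}\}$. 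You instead place $\theta_d$ in the $2$-plane $\operatorname{span}\{e_d,e_{d+1}\}$ and determine $\epsilon_{d+1}$ by hypothesis testing against the recovered projections, reconstructing $\fT$ from the axis coordinates. Both are valid; yours makes the discrete $2^{D-1}$-fold ambiguity structure more explicit, while the paper's ordering condition avoids the case analysis.

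One point you should tighten: the $1$D recovery along $\theta_d$ carries its \emph{own} unresolved reflection and shift, and matching via $\abs{\fC}$ only fixes the permutation, not that reflection. So the comparison is really among four hypotheses, $(\eta,\epsilon_{d+1})\in\{\pm1\}^2$ with $v_n=\eta\bigl(\cos\phi_d\,t_n^{(d)}+\epsilon_{d+1}\sin\phi_d\,t_n^{(d+1)}\bigr)+\mathrm{const}$, compared as difference lists. This is not a fatal gap: any two of the four hypotheses coinciding forces either all $t_n^{(d)}$ or all $t_n^{(d+1)}$ or all $\ip{\theta_d}{T_n}$ to be constant, contradicting collision-freeness of $\fT^{e_d}$, so for any $\phi_d$ with $\sin\phi_d\cos\phi_d\ne0$ at most one hypothesis survives. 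Also note that collision-freeness of $\fT^{\theta_d}$ for all but finitely many $\phi_d$ in the $2$-plane already follows from the hypothesis that $\fT^{e_d}$ is collision-free (the relevant difference-of-differences vectors have nonzero $d$-th coordinate), so your genericity claim is justified under the theorem's assumptions.
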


\begin{figure}[b]
  \centering
  \includegraphics{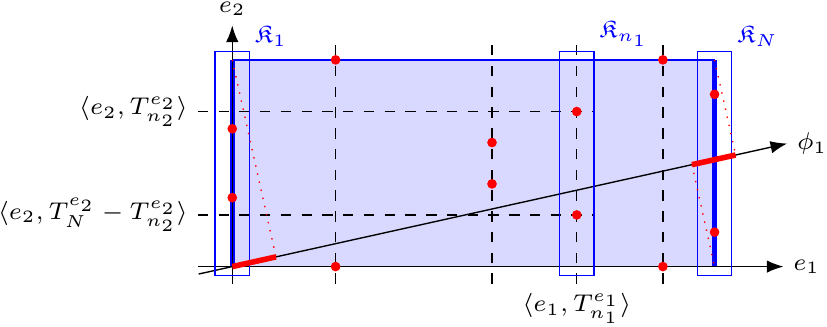}
  \caption{Construction of the candidate sets $\fK_{n_1}$ in the two-dimensional setup.
    Condition \eqref{eq:fix-order} is fulfilled
    if the projection of the left and right edge onto the line spanned by $\phi_1$%
    ---indicated by the red regions---%
    are disjoint.}
  \label{fig:cand}
\end{figure}

\begin{proof}
  Initially,
  we consider equispaced Fourier samples along an arbitrary line.
  For $\zeta \in \BR^D$ with $\norm{\zeta} = 1$,
  such samples have the form
  \begin{equation*}
    \abs{\hat \mu (hm \, \zeta)}^2
    =
    \abs{\hat \nu (hm \, \zeta)}^2
    \,
    \absB{\sum_{n=1}^{N} c_n \, \e^{-\I hm \ip{\zeta}{T_n}}}^2.
  \end{equation*}
  Essentially,
  these samples are the squared Fourier intensities
  of a point measure on the line
  with coefficient $c_{n}^{\zeta} \coloneqq c_n$
  and transitions $T_{n}^{\zeta} \coloneqq \ip{\zeta}{T_n}$.
  Without loss of generality,
  we henceforth assume that
  the families $\fC^{\zeta}\coloneqq [c_n^\zeta]$
  and $\fT^{\zeta}\coloneqq [T_n^\zeta]$ are ordered
  such that $0 = T_{1}^{\zeta} < \cdots < T_{N}^{\zeta}$.
  If $\fT^{\zeta}$ is collision-free,
  and $\fC^\zeta$ fulfils $\abs{c_1^\zeta} \ne \abs{c_N^\zeta}$,
  the assumption of \Cref{thm:pr1d} are satisfied,
  and $\fC^{\zeta}$ and $\fT^{\zeta}$ can be determined
  by \Cref{alg:1d-pr}
  up to trivial ambiguities.
  Where the global phase shifts and additional transitions are unproblematic,
  the conjugated reflection ambiguity has to be resolved.
  Owing to the distinct absolute values $\abs{\fC}$,
  we can identify the transitions $T_n^{e_d}$ with each other.
  More precisely,
  for every $n_1 \in \{1, \dots, N\}$,
  we find unique indices $n_2, \dots, n_D$
  such that $\abs{c_{n_1}^{e_1}} = \cdots = \abs{c_{n_D}^{e_D}}$.
  Up to a global shift,
  the true transitions $T_n$ of \eqref{eq:struc-sig}
  are contained in
  \begin{equation}
    \label{eq:cand-set}
    \tilde \fT
    \coloneqq
    \fK_1 \cup \cdots \cup \fK_N.
  \end{equation}
  with the candidate sets
  \begin{equation*}
    \fK_{n_1}
    \coloneqq
    \{
    (T_{n_1}^{e_1}, R_{n_2}^{e_2}, \dots, R_{n_D}^{e_D})
    :
    R_{n_d}^{e_d} \in \{ T^{e_{d}}_{n_d}, (T^{e_{d}}_{N} - T^{e_{d}}_{n_d}) \}
    \}
  \end{equation*}
  where $\abs{c_{n_1}^{e_1}} = \cdots = \abs{c_{n_D}^{e_D}}$.
  The construction of $\tilde \fT$ is schematically illustrated in Fig.~\ref{fig:cand}.
  Notice that the candidate sets $\fK_{n_1}$ are the vertices
  of $(D-1)$-dimensional cubes lying
  on shifted versions of the hyperplane orthogonal to $e_1$. 
  Therefore,
  we find $\phi_d$ with $\norm{\phi_d} = 1$
  such that $\{e_1, \phi_1, \dots, \phi_{D-1}\}$ form a basis
  and that
  \begin{equation}
    \label{eq:fix-order}
    \ip{\phi_d}{S_1} \le \ip{\phi_d}{S_N}
    \quad\text{for all}\quad
    S_1 \in \fK_1, S_N \in \fK_N,
  \end{equation}
  see also Fig.~\ref{fig:cand}.
  Moreover,
  $\phi_d$ may be chosen
  so that $\fT^{\phi_d}$ is collision-free,
  and Algorithm~\ref{alg:1d-pr} can be applied to recover $\fC^{\phi_d}$ and $\fT^{\phi_d}$.
  Due to \eqref{eq:fix-order},
  the conjugated reflection ambiguity for $\fT^{\phi_d}$ can be resolved,
  and the true transitions are given by
  \begin{equation}
    \label{eq:rec-T}
    \Bigl\{
    \bigl((e_1 \vert \phi_1 \vert \dots \vert \phi_D)^{*}\bigr)^{-1}
    (T_{n_1}^{e_1}, T_{n_2}^{\phi_1}, \dots, T_{n_D}^{\phi_{D-1}})^*
    :
    n_1 = 1, \dots, N
    \Bigr\},
  \end{equation}
  where $*$ denotes the conjugation and transposition,
  and where the indices are uniquely given by
  $\abs{c_{n_1}^{e_1}} = \abs{c_{n_2}^{\phi_1}} = \cdots = \abs{c_{n_D}^{\phi_{D-1}}}$.
  The reflection of $\fT^{e_1}$ would yield the conjugated reflection of $\mu$,
  which concludes the proof.
  \qed
\end{proof}

Similarly to the recovery guarantee on the line (\Cref{thm:pr1d}),
the proof is constructive and summerizes to the following reconstruction method,
which we will also use for the numerical simulations in §~\ref{sec:simulations}.

\begin{algorithm}[Phase Retrieval on the Real Space]
  \label{alg:md-pr}
  \newline
  \emph{Input:}
  $D \in \BN$,
  $N \in \BN$,
  $M \ge 2N(N-1) + 1$,
  $h > 0$ with $h \norm{T_{n}- T_{k}} < \pi$,
  adaptive samples of $\abs{\hat \mu}^{2}$.
  \begin{enumerate}[nosep]
  \item Sample $\abs{\hat\mu}^{2}$ along the axes to obtain
    $(\abs{\hat\mu(hm \, e_{d})}^2)_{m=0}^{M}$
    for $d=1,\dots,D$.
  \item Apply \Cref{alg:1d-pr} to compute $\fT^{e_{d}}$,
    and build $\tilde \fT$ in \eqref{eq:cand-set}.
  \item For $d=1,\dots,D-1$,
    choose $\theta_d \in \BR^D$ randomly
    such that \eqref{eq:fix-order} is satisfied.
  \item Sample $\abs{\hat\mu}^2$ to obtain
    $(\abs{\hat\mu(hm \, \theta_d)}^{2})_{m=0}^{M}$.
  \item Apply \Cref{alg:1d-pr} to compute
    $\fC^{\theta_d}$ and $\fT^{\theta_d}$.
  \item Compute $\fT$ by solving the equation system in \eqref{eq:rec-T},
    and set $\fC \coloneqq \fC^{e_1}$.
  \end{enumerate}
  \emph{Output:} $\fC$, $\fT$.
\end{algorithm}

The statement of \Cref{thm:pr2d} remains valid
if the unit vectors $e_d$ are replaced by an arbitrary basis $\psi_d$.
Similarly to \eqref{eq:rec-T},
the candidate set $\tilde \fT$ can then be determined by 
\begin{equation}
  \label{eq:gen-cand-set}
  \tilde \fT
  \coloneqq
  \bigl((\psi_1 \vert \dots \vert \psi_D)^{*}\bigr)^{-1}
  (\fK_1 \cup \cdots \fK_N).
\end{equation}
One of the key assumptions in \Cref{thm:pr2d} is that
the coordinates of $\fT$ with respect to the considered directions
are collision-free.
For a given set of transitions $\fT$,
this is holds true for almost all directions $\phi \in \BR^D$,
i.e.\ up to a Lebesgue null set.

\begin{lemma}
  \label{lem:coll-free}
  Let $\fT \coloneqq [T_1, \dots, T_N] \subset \BR^D$ be collision-free,
  then the family $\fT^\theta \coloneqq [\ip{\theta}{T_n} : T_n \in \fT]$
  is collision-free for almost all $\theta \in \BR^D$. 
\end{lemma}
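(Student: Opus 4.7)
The plan is to rewrite collision-freeness as a collection of finitely many linear non-vanishing conditions on $\theta$, and then invoke the fact that a hyperplane in $\BR^D$ has Lebesgue measure zero. Introduce the difference vectors $D_{nk} \coloneqq T_n - T_k$ for ordered pairs $(n,k)$ with $n \ne k$. By definition, $\fT$ being collision-free is equivalent to the $N(N-1)$ vectors $D_{nk}$ being pairwise distinct, i.e.\ for all distinct ordered pairs $(n,k) \ne (n',k')$ with $n \ne k$ and $n' \ne k'$, the vector
\begin{equation*}
  v_{(n,k),(n',k')} \coloneqq D_{nk} - D_{n'k'} \in \BR^D
\end{equation*}
is nonzero.

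Next, I translate the statement for the projected family: $\fT^\theta$ is collision-free exactly when the numbers $\ip{\theta}{D_{nk}}$ are pairwise distinct over all ordered pairs with $n \ne k$, which by linearity of $\ip{\cdot}{\cdot}$ is the same as
\begin{equation*}
  \ip{\theta}{v_{(n,k),(n',k')}} \ne 0
  \quad
  \text{for all admissible pairs } (n,k) \ne (n',k').
\end{equation*}
For each fixed nonzero $v \in \BR^D$, the failure set $\{\theta \in \BR^D : \ip{\theta}{v} = 0\}$ is a $(D-1)$-dimensional linear subspace, hence has $D$-dimensional Lebesgue measure zero. The admissible pairs form a finite collection, so the \enquote{bad} set on which $\fT^\theta$ fails to be collision-free is a finite union of hyperplanes and therefore a Lebesgue null set; the complement is a set of full measure, yielding the claim.

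There is no real obstacle here: the only point that needs a brief verification is that collision-freeness of $\fT$ guarantees $v_{(n,k),(n',k')} \ne 0$ for every admissible distinct pair (including the case $(n',k') = (k,n)$, where the vector equals $2\,D_{nk}$, which is nonzero since collision-freeness implies all $T_n$ are distinct). Once this is in place, the argument reduces to the elementary observation that finitely many proper linear subspaces cannot cover $\BR^D$ in measure.
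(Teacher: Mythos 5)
Your proof is correct and follows essentially the same route as the paper: both reduce the failure of collision-freeness of $\fT^\theta$ to $\theta$ lying in one of finitely many hyperplanes of the form $\{\theta : \ip{\theta}{(T_n - T_k) - (T_{n'} - T_{k'})} = 0\}$, whose normal vectors are nonzero precisely because $\fT$ is collision-free. Your explicit check of the degenerate pair $(n',k') = (k,n)$ is a welcome addition that the paper handles only implicitly.
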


\begin{proof}
  Let $T_{i_1}$, $T_{i_2}$, $T_{i_3}$, $T_{i_4}$ be
  arbitrary points in $\fT$,
  where only $T_{i_2}$ and $T_{i_3}$ may coincide.
  The dimension of the subspace
  $\{\theta \in \BR^D : \ip{\theta}{T_{i_1}-T_{i_2}} = \ip{\theta}{T_{i_3}-T_{i_4}}\}$
  may be $D-1$ at the most
  since $\fT$ is collision-free.
  Thus the family $\fT^\theta$ can only contain collisions
  for $\theta$ lying in the union of finitely many lower-dimensional subspaces,
  which gives the assertion.
  \qed
\end{proof}

Against the background of \Cref{lem:coll-free},
and since $D$ generic vectors form a basis,
the sampling along adaptively chosen lines
in the two-dimensional setup
can be replaced
by the sampling along arbitrary generic lines.

\begin{theorem}[Phase Retrieval in 2D]
  Let $\mu$ be of the form \eqref{eq:struc-sig}
  with $\hat \nu(\omega) \ne 0$,
  $\omega \in \BR^2$,
  collision-free $\fT$,
  and distinct $\abs{\fC}$.
  Further,
  let $\psi_1, \psi_2, \psi_{3}$ be generic vectors in $\BR^2$.
  Then $\mu$ can be uniquely reconstructed from
  \begin{equation*}
    \bigl\{
    \abs{\hat \mu(hm \, \psi_d)}
    :
    m = 0,\dots, 2N(N-1) + 1,
    d=1,\dots, 3
    \bigr\}
  \end{equation*}
  up to inevitable ambiguities.
\end{theorem}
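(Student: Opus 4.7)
The plan is to reduce to the univariate \Cref{thm:pr1d} along each of the three generic directions and then to exploit the over-determination provided by the third direction (beyond the basis) to fix the residual reflection ambiguity. Since any two of three generic vectors in $\BR^2$ are linearly independent, $\psi_1$ and $\psi_2$ may be taken as a basis and $\psi_3$ played off against them, mirroring the generalized form of \Cref{thm:pr2d} with candidate set \eqref{eq:gen-cand-set}.

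First, by \Cref{lem:coll-free}, the set of $\psi \in \BR^2$ for which $\fT^\psi$ fails to be collision-free has Lebesgue measure zero; excluding this null set for each of $\psi_1, \psi_2, \psi_3$ preserves the genericity assumption. Combined with the distinct $\abs{\fC}$ hypothesis, \Cref{thm:pr1d} then applies along each $\psi_d$ and recovers $\fC^{\psi_d}$, $\fT^{\psi_d}$ up to trivial univariate ambiguities (global phase, shift, and conjugated reflection). Since projection to a line leaves the coefficients unchanged, $\abs{c_n^{\psi_d}} = \abs{c_n}$, and the distinct magnitudes uniquely associate each index $n$ with one entry of $\fT^{\psi_d}$ in every direction.

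Second, I would fix $(\psi_1,\psi_2)$ as a basis and build the candidate set $\tilde\fT$ exactly as in \eqref{eq:gen-cand-set}: the four possible choices of reflection/no-reflection along $\psi_1$ and $\psi_2$ produce four candidate configurations for $\fT$, which collapse to two equivalence classes under the inevitable conjugated reflection, namely the true class and one spurious class. The samples along $\psi_3$ are then used to single out the true class by comparing the $\psi_3$-projection of each candidate configuration with the recovered $\fT^{\psi_3}$.

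The main obstacle is to show that, for generic $\psi_3$, only the true candidate configuration projects onto $\fT^{\psi_3}$ consistently (up to the trivial one-dimensional shift and reflection). Equivalently, given any two distinct finite subsets $A \ne B$ of $\BR^2$ of the same cardinality, the set of $\psi_3 \in \BR^2$ for which the $\psi_3$-projections of $A$ and $B$ agree as unordered sets—allowing a global shift and a global reflection—lies in a finite union of one-dimensional linear subspaces of $\BR^2$, hence has Lebesgue measure zero. Removing this last exceptional set, a generic $\psi_3$ separates the true from the spurious configuration, and simultaneously realises the order condition \eqref{eq:fix-order} needed in the proof of \Cref{thm:pr2d}; this completes the reduction.
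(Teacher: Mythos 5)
Your overall strategy---reduce to \Cref{thm:pr1d} along each of the three directions via \Cref{lem:coll-free}, assemble the candidate configurations from $\psi_1,\psi_2$ as in \eqref{eq:gen-cand-set}, and use $\psi_3$ to eliminate the spurious reflection class---is sound, but your disambiguation step is genuinely different from the paper's. The paper does not argue by genericity there: it observes that in 2D the candidates lie in a parallelogram whose opposite edges carry $\fK_1$ and $\fK_N$, so that any admissible $\psi_3$ separates either this pair of edges (condition \eqref{eq:fix-order}) or, after interchanging the roles of $\psi_1$ and $\psi_2$, the other pair; the order condition then resolves the reflection of $\fT^{\psi_3}$ and the linear system \eqref{eq:rec-T} finishes. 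You instead compare the $\psi_3$-projections of the two candidate classes directly with the recovered $\fT^{\psi_3}$ and exclude the spurious class outside a finite union of one-dimensional subspaces of directions. This works: since the distinct $\abs{\fC}$ force the matching, the relevant difference vectors have dual-basis coordinates $(0,\pm 2(T_n^{\psi_2}-T_m^{\psi_2}))$ or $(\pm 2(T_n^{\psi_1}-T_m^{\psi_1}),0)$, hence are nonzero, so your measure-zero claim holds---though you assert rather than prove it, and you should also note that the exceptional set depends on the unknown $\fT$ (acceptable here, since the collision-freeness exclusion already has that character). Your route buys a proof that never needs the order condition; the paper's buys an explicit, deterministic case distinction usable in \Cref{alg:md-pr}.

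One concrete error: your closing claim that a generic $\psi_3$ \enquote{simultaneously realises the order condition \eqref{eq:fix-order}} is false. That condition requires the $\psi_3$-projections of the two opposite edges carrying $\fK_1$ and $\fK_N$ to be disjoint, which holds only on an open cone of directions (roughly half of them), not up to a null set; on the complementary cone it is the \emph{other} pair of opposite edges that is separated, and the paper handles exactly this case by swapping $\psi_1$ and $\psi_2$. Since your own direct-comparison argument never actually uses \eqref{eq:fix-order}, you should either delete that clause and finish by identifying the surviving candidate class with $\mu$ up to inevitable ambiguities, or import the paper's interchange argument explicitly.
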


\begin{proof}
  To establish the statement,
  we adapt the proof of Theorem~\ref{thm:pr2d},
  where $\psi_1$, $\psi_2$ play the role of $e_1$, $e_2$,
  and $\psi_3$ the role of the adaptive direction $\phi_1$.
  Due to Lemma~\ref{lem:coll-free},
  the families $\fT^{\phi_d}$ are collision-free for generic lines;
  so the application of Algorithm~\ref{alg:1d-pr} is unproblematic.
  Considering the construction of $\fK_{n_1}$,
  we notice that the candidates are contained in a parallelogram,
  where $\fK_1$ and $\fK_N$ lie on opposite edges,
  see Fig.~\ref{fig:cand} for an illustration.
  If the coordinates of $\fK_1$ and $\fK_N$ satisfy \eqref{eq:fix-order}
  with respect to the third direction $\psi_3$,
  we can apply the procedure in the proof of Theorem~\ref{thm:pr2d}
  to recover $\mu$.
  Otherwise,
  we interchange the role of $\psi_1$ and $\psi_2$.
  In this case,
  we obtain a second set of candidates in the same parallelogram.
  The new sets $\fK_1'$ and $\fK_N'$, however, lie on the remaining to two edges
  such that $\psi_3$ now fulfils \eqref{eq:fix-order},
  and $\mu$ can be recovered.
  \qed
\end{proof}

\section{Simulations}
\label{sec:simulations}

To substantiate the theoretical observations,
we apply the constructive proof (Algorithm~\ref{alg:md-pr})
to a minor, synthetic example.
Our goal is recover the sparse structure in Fig.~\ref{fig:sim-mu},
which consists of five sources.
Each source corresponds to a Gaussian
with standard derivation $\nicefrac{1}{2}$
and to a complex coefficient.
For repeatability,
the exact locations and coefficients are given in Table~\ref{tab:T-C}.
The Fourier intensity of the true signal is shown in Fig.~\ref{fig:sim-mu-hat}.
Instead of sampling the whole Fourier domain,
we only use equispaced samples on three predefined lines,
which are depicted as red lines in Fig.~\ref{fig:sim-mu-hat}.
The first two lines correspond to the Cartesian axes,
and the third to the angle $0.143 \pi$.
On each line,
we take 100 equispaced, noise-free samples,
which is slightly more than the 83 samples to employ
the univariate, sparse phase retrieval in Algorithm~\ref{alg:1d-pr}.
The sampling distance is
$h \coloneqq \max\{ \norm{T_k - T_n}\} / 2\pi \approx 0.0387$.

\begin{figure}
  \centering
  \subcaptionbox{Density of $\abs{\mu}$.\label{fig:sim-mu}}{%
    \includegraphics{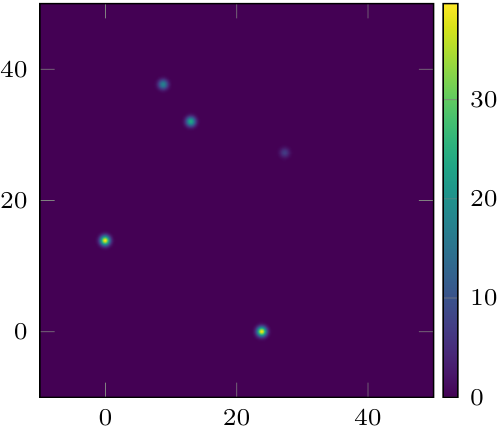}}
  \quad
  \subcaptionbox{Fourier intensity $\abs{\hat \mu}$\label{fig:sim-mu-hat}}{%
    \includegraphics{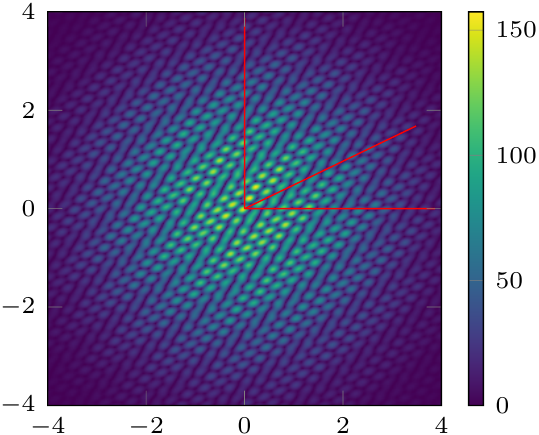}}
  \caption{Absolute value of the complex-valued density of the true $\mu$
    and the corresponding Fourier intensity $\abs{\hat \mu}$.
    The red lines indicate the predefined sampling direction for Algorithm~\ref{alg:md-pr}.}
  \label{fig:sim}
\end{figure}

\begin{table}[t]
  \centering\footnotesize
  \caption{Transitions and coefficients of the true sparse signal $\mu$.
    Algorithm~\ref{alg:md-pr} recovers $\fT$ up to a maximal absolute error of $6.982 \cdot 10^{-8}$,
    and $\fC$ up to a maximal absolute error of $2.898 \cdot 10^{-5}$.}
  \label{tab:T-C}
  \begin{tabular}{lccccc}
    \toprule
    $n$ & 1 & 2 & 3 & 4 & 5 \\
    \midrule
    $T_n^*$ & $(27.374, 27.258)$ & $(13.065, 32.008)$ & $(8.847, 37.665)$ & $(0.000, 13.874)$ & $(23.876, 0.000)$ \\
    $c_n$ & $7.293+5.115\I$ & $30.665 +2.258\I$ & $2.740 +22.286\I$ & $1.576+49.834\I$ & $17.400  +46.587\I$ \\
    \bottomrule
  \end{tabular}
\end{table}

\begin{figure}[t]
  \centering
  \includegraphics{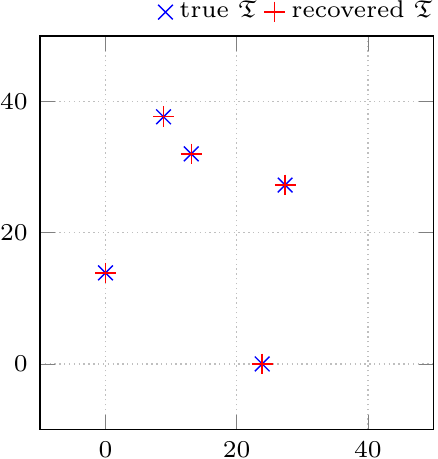}
  \caption{True and recovered transitions $\fT$ and $\fT'$
    after removing the trivial shift and conjugated reflection ambiguity.}
  \label{fig:rec-T}
\end{figure}

Applying Algorithm~\ref{alg:md-pr} here yields
an accurate approximation of the true transitions $\fT$.
To compare the recovered values $\fT'$ with the true ones,
we conjugate and reflect the recovered signal
such that both signals have the same orientation.
Furthermore,
we shift both signals
so that the leftmost and lowermost vectors
lie on the Cartesian axes.
The recovered values are shown in Fig.~\ref{fig:rec-T}
and nearly coincide with the true values.
The maximal absolute error
after eliminating the shift and conjugated reflection ambiguity 
is given by
\begin{equation*}
  \max_{n=1,\dots,5}\{\abs{T_n - T'_n} \}
  =
  6.982 \cdot 10^{-8}.
\end{equation*}
The maximal absolute error for the coefficients is here
\begin{equation*}
  \max_{n=1,\dots,5}\{\abs{c_n - \e^{\I \alpha} \, c'_n} \}
  =
  2.898 \cdot 10^{-5},
\end{equation*}
where we choose the global phase shift $\alpha$
such that the maximal absolute error is minimized.

This first numerical example shows that
a sparse, bivariate signal can completely recovered
using only samples on three predefined lines
instead of sampling the whole Fourier domain.
In all fairness,
Prony's method is known to be very sensitive to noise
and to become unstable
if the frequencies---%
here $\ip{\zeta}{T_n - T_k}$---%
almost coincide.
Since Prony's method has to recover
$2N(N-1)+1$ frequencies for an $N$-sparse signal,
$N$ cannot be increased very far.
The development of a stable algorithm
for the proposed sampling scheme,
which can deal with additional noise,
thus remains open for  further research.

\section{Conclusion}
\label{sec:conclusion}

The focus of this paper is the sparse phase retrieval problem
as it occurs in speckle imaging and crystallography.
Combining the one-dimensional phase retrieval approach in \cite{PPST18,beinert2017sparse}
with the adaptive sampling strategy from \cite{PW13},
we derive a Prony-based recovery method for $D$-variate, sparse signals.
This constructive method is the key ingredient
to establish recovery guarantees for collision-free signals.
Instead of sampling the Fourier intensity on the entire $D$-dimensional domain,
we only require samples from $2D-1$ adaptively chosen or generic lines.
This leads to a sampling complexity of $\mathcal O(DN^2)$,
where $D$ is the dimension of the signal
and $N$ the sparsity level,
i.e.\ the number of shifted components.
The $N^2$ is accounted for by Prony's method,
which we use to determine the parameters of the given Fourier intensity%
---an exponential sum with quadratic structure.
Since the unknown signal consists of merely $(D+2)N$ real parameters,
the question arises
whether the signal can also be uniquely recovered
using less measurements.
This question is left for further research.
During the numerical simulations,
we show that the Prony-based method can, in principle, be applied
to recover the wanted signal from noise-free measurements.
Since Prony's method is, however, very sensitive to noise,
one of our next steps is to derive a more stable algorithm
for the recovery of specific structured, multivariate signals.


\bibliographystyle{splncs04}
\bibliography{reference}

\end{document}